\documentclass[pra,showpacs,
twocolumn
]{revtex4}
\usepackage{amsmath,amsthm,amssymb,amsfonts,graphicx}

\begin{document}

\title{Quantum anonymous voting with anonymity check}
\author{Dmitri Horoshko}
\email[E-mail: ]{dhoroshko@rambler.ru} \affiliation{B.I.Stepanov
Institute of Physics, NASB, Nezavisimosti Ave. 68, Minsk 220072
Belarus}
\author{Sergei Kilin}
\affiliation{B.I.Stepanov Institute of Physics, NASB,
Nezavisimosti Ave. 68, Minsk 220072 Belarus}
\date{\today}
\begin{abstract}
We propose a new protocol for quantum anonymous voting having
serious advantages over the existing protocols: it protects both
the voters from a curious tallyman and all the participants from a
dishonest voter in unconditional way. The central idea of the
protocol is that the ballots are given back to the voters after
the voting process, which gives a possibility for two voters to
check the anonymity of the vote counting process by preparing a
special entangled state of two ballots. Any attempt of cheating
from the side of the tallyman results in destroying the
entanglement, which can be detected by the voters.
\end{abstract}
\pacs{03.67.-a, 03.67.Dd} \maketitle

There is a general trend in the modern society to automatization
and computerization of nearly all aspects of social life,
including such subtle area as voting procedure in various
contexts: from state governmental elections to decision making in
rather small groups like parliaments or councils. As a
consequence, a number of protocols for electronic voting have been
developed and successfully applied in the last decades
\cite{Gritzalis}. Since such protocols meet the information
security problems of confidentiality, authentication and data
integrity, they belong to the scope of the science of
cryptography. In the modern electronic voting systems the
information security is provided by means of public-key
cryptography, guaranteeing secrecy under condition of limited
computational resources of a potential adversary. With the advent
of quantum computers \cite{KilinRev} this condition becomes
impractical, thus inspiring interest in unconditionally secure
voting schemes and protocols. One perspective way to this end is
connected to using quantum systems as information carriers, which
proved to be successful for the development of unconditionally
secure key distribution; the technology known as quantum key
distribution \cite{GisinRev} has reached presently the level of
commercial realizations.

In the present work we propose a protocol of anonymous
binary-valued voting involving $n$ persons (voters), each making a
binary decision $b_i\in\{0,1\}$ and writing it on a ballot, and
one person (tallyman) collecting the ballots and announcing the
result $s=\sum_i b_i$. The proposed protocol possesses two
security properties. The first property is the ''anonymity of
voting'', meaning that the value of individual vote of the $i$th
voter, $b_i$, remains unknown to other voters, the tallyman, and
any third party possibly monitoring the communication lines,
unless $s=0$, $s=n$, $min\{s,n-s\}$ voters cooperate, or the $i$th
voter discloses his decision. The second property may be called
''non-exaggeration'' and means inability of a voter to contribute
a number different from $0$ or $1$ to the final sum $s$. The
anonymity of voting protects the voters from a curious tallyman
(and other parties), who may wish to learn who voted in which way,
while the non-exaggeration protects the entire community from
malicious voters who may wish to vote twice. The proposed protocol
includes operations with quantum systems and provides
unconditional security for both anonymity and non-exaggeration,
which distinguishes it from other existing voting protocols, both
quantum and classical, briefly reviewed below.

There is a possibility of guaranteeing the anonymity of voting
unconditionally by means of conventional, i.e. classical
cryptography, based on mathematical encryption. The corresponding
voting protocol \cite{ChaumEuroCrypt88} is based on the principle
of ''sender untraceability'', meaning such a communication scheme,
where the recipient of several messages from several senders
cannot determine which message came from which sender. Such a
communication can be realized with unconditional security in the
sense that the recipient is unable to establish any relation
between the messages and the senders, even being in possession of
infinite computational power \cite{ChaumJCrypt88}. However, the
very property of untraceability creates, in the case of voting, an
additional problem of determining which ballots come from legal
voters, since illegal participants can send ballots in an
untraceable way. This problem is solved by a special ''ballot
issuing'' protocol (based on the technique of ''blind signature'')
providing each legal voter with an ''unforgeable'' and ''blind''
digital ballot, which is used for sending a vote. The term
''unforgeable'' means that the ballot cannot be cloned, while the
term ''blind'' means that the ballots are in no way related to the
identities of legal voters. The ballots in the ballot issuing
protocol are unconditionally ''blind'' but only conditionally
''unforgeable'', that is a person in possession of rich enough
computational power is able to vote instead of legal voters. Thus,
the property of ''non-exaggeration'' is realized by the overall
voting protocol in a conditional way only.

A quantum protocol for anonymous surveying has been proposed
recently, whose aim is to calculate the sum of individual
contributions of the participants, like in the voting protocol,
but with the contributions being real numbers from some limited
interval rather than binary digits \cite{Vaccaro05}. The protocol
is based on bipartite entangled quantum state, whose relative
phase carries the sum of contributions and can be measured only
when two parts of the entangled system are gathered in the same
location. Application of this protocol to binary-valued voting
meets the problem of multiple voting by a dishonest voter, which
is proposed to be solved by employing two non-cooperative ballot
agents (tallymen). A similar protocol for quantum voting has been
proposed \cite{Buzek05}, meeting the same problem of multiple
voting, and various ways for solving it have been discussed
requiring also an employment of two non-cooperative tallymen.
Another similar quantum protocol based on multipartite
entanglement and quantum Fourier transform has been proposed
\cite{Dolev06}, which is also vulnerable to multiple voting and
may be securely applied only under the assumption that the voter
has no full control of the ballot at the time of writing the
choice. All the mentioned quantum protocols provide unconditional
anonymity of voting, but the property of ''non-exaggeration'' is
reached on the cost of serious additional assumptions, which may
be viewed impractical in some applications. In contrast, the
present protocol provides in unconditional way both
''eavesdropping detection'', meaning non-zero probability of
detection of any attempt to learn the distribution of votes among
the voters, and ''non-exaggeration'', thus protecting the protocol
from dishonest voters from one side and dishonest tallyman from
the other side. Here we consider, like in Refs.
\cite{ChaumEuroCrypt88,Vaccaro05,Buzek05,Dolev06}, a curious but
not malicious tallyman, whose dishonest action is limited to
learning the distribution of votes among the voters, but not to
announcing a wrong value of the voting result $s$.

The protocol of voting is as follows. The participants are $n$
legal voters labelled by index $i=1,2,...,n$ and a tallyman.
\begin{enumerate}
\item Each voter chooses either to vote or to check the anonymity
of voting.
\begin{enumerate} \item In the case of voting the voter makes a binary decision $b_i$ with
$b_i=0$ corresponding to ''no'' and $b_i=1$ corresponding to
''yes'' decision, and encodes it into a state of a two-level
quantum system -- qubit -- playing the role of a ballot. Two
orthogonal states $|0\rangle_i$ and $|1\rangle_i$ of a qubit
(computational basis) are used for encoding of the corresponding
value of $b_i$. \item In the case of anonymity check the $i$th
voter cooperates with the $j$th voter, who also chooses to check
the anonymity, and they together prepare their pair of qubits in
the Bell state $|\Psi^{+}\rangle_{ij}$, where the Bell states are
defined as
\begin{eqnarray}
|\Phi^{\pm}\rangle_{ij}&=&\frac1{\sqrt{2}}\{|0\rangle_i|0\rangle_j\pm
|1\rangle_i|1\rangle_j\},\\
|\Psi^{\pm}\rangle_{ij}&=&\frac1{\sqrt{2}}\{|0\rangle_i|1\rangle_j\pm
|1\rangle_i|0\rangle_j\}.
\end{eqnarray}
\end{enumerate}
\item After the encoding all voters send their qubits to the
tallyman together with their identities. The latter excludes the
possibility of voting for illegal participants and the possibility
for legal voters to vote instead of their colleagues. \item The
tallyman collects all $n$ qubits and calculates the number of
''yes'' votes by applying to the $n$-qubit system the projector
valued measure (PVM)
\begin{equation} \label{PVM}
\hat P(s)=\sum_{\pi} |m(s,\pi)\rangle\langle m(s,\pi)|,
\end{equation}
where $|m(s,\pi)\rangle$ is a product state of $n$ qubits in the
computational basis, having exactly $s$ 1's in the order
determined by the permutation variable $\pi$. The tallyman
announces the voting result ''s votes yes''. \item The tallyman
sends the qubits back to the voters. \item The voters make a
ballot test.
\begin{enumerate}
\item The voters, who have chosen to vote, measure their qubits in
the computational basis. If the state of the qubit is different
from the sent one, they state the ballot test failure. \item The
voters, who have chosen to make an anonymity check, make a
measurement of their pair of qubits in the Bell basis. If they get
a result which is different from the Bell state
$|\Psi^{+}\rangle_{ij}$, they state the ballot test failure.
\end{enumerate}
\end{enumerate}

A few comments to the protocol are necessary. In the present
protocol the statement of the ballot test failure does not mean a
public accusation of the tallyman, it is rather an information for
the personal use by the voter (e.g. a council member).

The numbering with $\pi$ is as follows. All $n$-bit strings with
exactly $s$ 1's represent numbers $0\le m\le (2^n-1)$ in binary
notation. Let us sort the strings in increasing order of the
corresponding numbers $m$ and label them with index $\pi$ taking
consecutive integer values from 1 to $d_s={n\choose s}$. In this
way for any $0\le s\le n$ we get a set of strings $m(s,\pi)$. The
product state of $n$ qubits in computational basis with individual
qubit states $|b_i\rangle_i$, $b_i$ being $i$th bit from the
string $m(s,\pi)$, is the state $|m(s,\pi)\rangle$. For example,
in the case of 5 qubits $m(1,2)=00010$ and
$|m(1,2)\rangle=|0\rangle_5|0\rangle_4|0\rangle_3|1\rangle_2|0\rangle_1$.
The states $|m(s,\pi)\rangle$ are mutually orthogonal.

The projector given by Eq.(\ref{PVM}) is a projector on the
subspace of $n$-qubit system, having $s$ states $|1\rangle$ and
$n-s$ states $|0\rangle$. Let us denote this $d_s$-dimensional
subspace $V_s$. It is easy to see, that the subspaces
corresponding to different values of $s$ are orthogonal and their
sum is the entire state space of $n$ qubits. The states
$|m(s,\pi)\rangle$ for given $s$ form a basis in $V_s$. The
application of projective measurement Eq.(\ref{PVM}) corresponds
to measuring the number of ''yes'' votes, but not their
distribution among the voters.

Let us see how the protocol guarantees the anonymity of voting.
Consider an event $E(\mu)$ consisting in $2k$ voters choosing to
check the anonymity, $l$ voters voting ''yes'' and the rest voting
''no''. The state of $n$ ballot qubits collected by the tallyman
is represented by a state
\begin{equation}\label{E}
|E(\mu)\rangle=\frac1{\sqrt{2^k}}\sum_{\pi\in\Omega
(\mu)}|m(k+l,\pi)\rangle,
\end{equation}
where $\Omega (\mu)$ is a set of $2^k$ possible values of $\pi$.
The state Eq.(\ref{E}) belongs to the subspace $V_{k+l}$ and
therefore is not affected by the projective measurement defined by
Eq.(\ref{PVM}). In the absence of errors the qubits sent back to
the voters will always pass the ballot test in the Step 5.

Now we consider a curious tallyman, who makes an additional
measurement of qubits with the aim to obtain some information on
who voted which way. The simplest way to learn the vote of the
$i$th voter is just to measure the $i$th qubit in computational
basis. If the $i$th voter has chosen to vote, this attack passes
unnoticed. But, if the $i$th voter has chosen to check the
anonymity with the $j$th voter, their state
$|\Psi^{+}\rangle_{ij}$ will be transformed into
$|0\rangle_i|1\rangle_j$ or $|1\rangle_i|0\rangle_j$ with equal
probabilities, and the subsequent Bell measurement will give
results $|\Psi^{+}\rangle_{ij}$ or $|\Psi^{-}\rangle_{ij}$ with
probabilities $\frac12$. The latter result means the anonymity
check failure. Thus, a curious tallyman faces a risk of being
detected.

The possible attacks from a curious tallyman are in no way
restricted to measurement of single qubits. The tallyman may wish
to learn some partial information concerning the distribution of
votes, for example, the total number of ''yes'' votes from a
fraction of the voters. As it was mentioned above, we consider a
curious but not malicious tallyman, who follows the protocol up to
projecting the qubits onto a subspace $V_s$ and correctly
determining the value of $s$. After that the tallyman may be
interested in making an additional measurement of the qubits. To
prove the unconditional ''eavesdropping detection'' we need to
show that for any such measurement there is an event $E(\mu)$ for
which the probability of ballot test failure is non-zero.

The most general type of measurement on a system of n qubits,
which we call ''the object'', consists in attaching to them
another quantum system of at least the same dimensionality (the
measuring apparatus), making a unitary transformation $U_{OA}$ of
both the object and the apparatus, and analyzing the resulting
state of the apparatus \cite{Wigner63}. Since the states
$|m(s,\pi)\rangle$ for given $s$ form a basis in $V_s$, the
unitary transformation can be determined by its action on the
basis states:
\begin{equation}\label{Wig}
U_{OA}|m(s,\pi)\rangle_O|a_0\rangle_A=\sum_{\pi'}{|m(s,\pi')\rangle_O|a_{\pi\pi'}\rangle_A},
\end{equation}
where $|a_0\rangle_A$ is the initial state of the apparatus, and
$|a_{\pi\pi'}\rangle_A$ are its final states, generally not
normalized. The subscripts $O$ and $A$ refer to the object and the
apparatus respectively. Here we suggest that the measurement does
not take the state of the qubits outside the subspace $V_s$,
because otherwise the non-zero probability of ballot test failure
is obvious. Thus, all possible measurements of the tallyman are
parameterized by a set of states $\{|a_{\pi\pi'}\rangle_A\}$.

To prove the property of ''eavesdropping detection'' of the
proposed protocol, we need a result concerning the structure of
strings $m(s,\pi)$ for given $s$. In the following we imply that
$s$ is fixed and the positions of bits in a string are numbered
from right to left.

\newtheorem*{lemma}{Lemma}
\begin{lemma}
For given $s$ and any two numbers $1\le \pi,\pi'\le d_s$, the
string $m(s,\pi')$ can be obtained from the string $m(s,\pi)$ by a
finite number of pairwise permutations of 0s and 1s.
\end{lemma}
\begin{proof}
Let $w(s,\pi,\pi')$ be the set of positions of bits, which are
different in $m(s,\pi)$ and $m(s,\pi')$. This set is a sum of two
non-overlapping subsets: $w_0(s,\pi,\pi')$, containing the
positions of bits which are equal to $0$ in $m(s,\pi)$, and
$w_1(s,\pi,\pi')$, containing the positions of bits which are
equal to $1$ in $m(s,\pi)$. The lengths of the subsets
$w_0(s,\pi,\pi')$ and $w_1(s,\pi,\pi')$ coincide, because the
number of 1s in both strings is the same. Let us make a set of
pairs $w_{01}(s,\pi,\pi')$ of the elements of both subsets, taking
one position from $w_0(s,\pi,\pi')$ and one position from
$w_1(s,\pi,\pi')$ in increasing order. The string $m(s,\pi)$
subjected to permutation of bits at positions defined by the set
$w_{01}(s,\pi,\pi')$ gives the string $m(s,\pi')$.
\end{proof}

Now we can proceed to proving the property of ``eavesdropping
detection'' of the proposed protocol, which is based on the
following theorem.

\newtheorem*{theorem}{Theorem}
\begin{theorem}
For any measurement, defined by the apparatus states
$\{|a_{\pi\pi'}\rangle_A\}$, where is an event $E(\mu)$ for which
the probability of ballot test failure is non-zero, unless all the
states satisfy
\begin{equation}
|a_{\pi\pi'}\rangle_A=|a_{11}\rangle_A\delta_{\pi\pi'},
\end{equation}
i.e. no measurement is done.
\end{theorem}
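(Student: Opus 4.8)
The plan is to argue the contrapositive: assuming every event $E(\mu)$ passes the ballot test with certainty, I will show the measurement must be trivial, $|a_{\pi\pi'}\rangle_A=|a_{11}\rangle_A\delta_{\pi\pi'}$. The starting observation is that the probability of all ballot tests passing for a given event equals the fidelity $\langle E(\mu)|\rho_O|E(\mu)\rangle$, where $\rho_O=\mathrm{Tr}_A[\,U_{OA}(|E(\mu)\rangle\langle E(\mu)|\otimes|a_0\rangle\langle a_0|)U_{OA}^\dagger\,]$ is the reduced state returned to the voters. This holds because the individual tests act on disjoint qubits and their conjunction is exactly the rank-one projector $|E(\mu)\rangle\langle E(\mu)|$: the product of the fixed one-qubit states of the voters and the $|\Psi^{+}\rangle$ states of the checking pairs expands into precisely the superposition of Eq.(\ref{E}). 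Consequently, zero failure probability for $E(\mu)$ is equivalent to the factorization $U_{OA}|E(\mu)\rangle_O|a_0\rangle_A=|E(\mu)\rangle_O|\chi_\mu\rangle_A$ for some apparatus state $|\chi_\mu\rangle_A$.

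I would then extract constraints in two stages. First, take events with no anonymity checkers: here $|E(\mu)\rangle=|m(s,\pi)\rangle$ ranges over every basis string of $V_s$ as the yes-voters vary, and the pass probability reduces to $\||a_{\pi\pi}\rangle_A\|^2$. Setting this equal to $1$ and combining with the relation $\sum_{\pi'}\||a_{\pi\pi'}\rangle_A\|^2=1$, which follows from unitarity of $U_{OA}$ on $V_s\otimes|a_0\rangle_A$, forces $|a_{\pi\pi'}\rangle_A=0$ for all $\pi'\neq\pi$ together with $\||a_{\pi\pi}\rangle_A\|=1$. Writing $|a_\pi\rangle_A:=|a_{\pi\pi}\rangle_A$, the transformation is now diagonal; population leakage between basis states is already excluded by the computational-basis test alone.

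Second, take single-pair events: a checking pair placed at two positions $p,q$ with the remaining $n-2$ voters' bits held fixed prepares exactly $\tfrac1{\sqrt2}(|m(s,\pi)\rangle+|m(s,\pi')\rangle)$, where $m(s,\pi)$ and $m(s,\pi')$ differ by transposing a $0$ and a $1$ at $p,q$. Using diagonality, the pass probability evaluates to $\tfrac12(1+\mathrm{Re}\langle a_\pi|a_{\pi'}\rangle)$, so demanding it equal $1$ forces the unit vectors to coincide, $|a_\pi\rangle_A=|a_{\pi'}\rangle_A$, whenever $\pi,\pi'$ are related by one such transposition. The Bell check thus pins down precisely the relative phases that the computational-basis test cannot detect.

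Finally, I would invoke the Lemma to globalize: it guarantees that any two strings $m(s,\pi)$, $m(s,\pi')$ in $V_s$ are joined by a finite chain of pairwise $0$--$1$ permutations, each link being a legitimate single-pair event, so transitivity along the chain yields $|a_\pi\rangle_A=|a_{\pi'}\rangle_A$ for all $\pi,\pi'$, i.e.\ $|a_{\pi\pi'}\rangle_A=|a_{11}\rangle_A\delta_{\pi\pi'}$; the cases $s=0,n$ are trivial since $d_s=1$. I expect the main effort to lie in the fidelity/factorization bookkeeping of the first step and, above all, in the phase-equalization of the second: the connectivity supplied by the Lemma is exactly what converts the purely local per-pair constraints into the single global statement that all diagonal apparatus states are identical.
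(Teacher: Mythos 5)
Your proposal is correct and follows essentially the same route as the paper: off-diagonal apparatus states $|a_{\pi\pi'}\rangle_A$ are eliminated by all-vote events through the computational-basis test, and equality of the diagonal states $|a_{\pi\pi}\rangle_A$ is then forced by anonymity-check events constructed from the Lemma. The only differences are cosmetic and harmless: you argue the contrapositive and chain single-pair Bell checks through intermediate strings by transitivity, whereas the paper runs a direct case split and uses a single event in which all $k$ transposed pairs check simultaneously (a superposition of $2^k$ basis strings), with your explicit fidelity and unitarity bookkeeping merely filling in steps the paper leaves implicit.
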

\begin{proof}
Let us suggest that the apparatus states contain a non-zero
off-diagonal state $|a_{\pi\pi'}\rangle_A$, $\pi\ne\pi'$. Consider
the event $E(\mu)$, where all voters have chosen to vote and the
distribution of votes corresponds to the string $m(s,\pi)$. For
this event the probability of ballot test failure is non-zero,
because the qubits received by the voters are in a mixture having
component $|m(s,\pi')\rangle$.

Now let us consider measurements with the apparatus states
satisfying
\begin{equation}
|a_{\pi\pi'}\rangle_A=|a_{\pi\pi}\rangle_A\delta_{\pi\pi'}.
\end{equation}
Consider any two values $\pi\ne\pi'$. Due to the Lemma the strings
$m(s,\pi')$ and $m(s,\pi)$ differ by finite number $k$ of pairwise
permutations determined by the set of bit position pairs
$w_{01}(s,\pi,\pi')$. Consider the event $E(\nu)$, where $k$ pairs
of voters, determined by $w_{01}(s,\pi,\pi')$, have chosen to
check the anonymity, and the rest have voted in a way described by
the coinciding bits of $m(s,\pi')$ and $m(s,\pi)$. For this event
the state of qubits before the measurement is a superposition of
$2^k$ states of the type of Eq.(\ref{E}), including
$|m(s,\pi)\rangle$ and $|m(s,\pi')\rangle$. After the interaction
with the apparatus these two components get factors
$|a_{\pi\pi}\rangle$ and $|a_{\pi'\pi'}\rangle$ respectively, as
indicated by Eq.(\ref{Wig}), which leads to a non-zero probability
of wrong result for Bell state measurement, unless
$|a_{\pi\pi}\rangle=|a_{\pi'\pi'}\rangle$.
\end{proof}

In summary, we have proposed a quantum protocol of voting,
guaranteeing that each voter contributes only one vote and that
any attempt of learning who voted which way is detectable with
non-zero probability. The protocol is a cryptographic primitive,
intended to be an element of a more complicated cryptographic
system, providing complex security of voting, including, e.g.
authentication of legal voters etc. The main weakness of the
protocol is its inability to realize a guaranteed anonymity of a
single voting act, providing only the probabilistic
''eavesdropping detection'', which is useful for application to
many voting acts during a rather long period. Another weakness is
connected to the necessity of cooperation of voters having
opposite decisions, i.e. most probably, belonging to different
fractions. However, to our knowledge, it is the first voting
protocol uniting the protection of the voters from a dishonest
tallyman and the protection of the participants from a dishonest
voter in unconditional way.

\begin{acknowledgments}
This work was supported by the project EQUIND performed within the
$6^{th}$ Framework programme of European Commission.
\end{acknowledgments}

\end{document}